\renewcommand{\footnoterule}{%
  \kern -3pt
  \hrule width \textwidth height 0.5pt
  \kern 2pt
}
\newtheorem{theorem}{Theorem}[section]
\newcommand{\PRLsep}{\noindent\makebox[\linewidth]{\resizebox{0.750\linewidth}{1pt}{$\blacklozenge$}}\bigskip}
\begin{document}

\pagestyle{fancy}
\fancyhead{} 
\fancyhead[OR]{\thepage}
\fancyhead[OC]{{\footnotesize{\textsf{SOME SINGULAR SPACETIME ALTERNATIVES}}}}
\fancyfoot{} 
\renewcommand\headrulewidth{0.5pt}
\addtolength{\headheight}{2pt} 

\title{{\bf{\LARGE{\textsf{Some singular spacetimes and their possible alternatives}}}}\\{\normalsize\textsf{International conference on gravitation, astrophysics and cosmology 2024}}}

\author {
{\small Andrew DeBenedictis  \footnote{adebened@sfu.ca}} \\
\it{\small Faculty of Science, Simon Fraser University} \\
\it{\small and}\\
\it{\small The Pacific Institute for the Mathematical Sciences}\\
\it{\small Burnaby, British Columbia, V5A 1S6, Canada}\\
}
\date{{\small October 6, 2024}}
\maketitle

\setcounter{footnote}{0}
\begin{abstract}
\noindent This is a brief review article of a seminar given at the International Conference on Gravitation, Astrophysics and Cosmology 2024 (ICGAC-2024), Mathura, India. We begin with a historical survey of some singular solutions in the theory of gravitation, as well as a very brief discussion of how black holes could physically form. Some possible scenarios which could perhaps eliminate these singularities are then reviewed and discussed. Due to the vastness of the field the coverage is not exhaustive, but instead the concentration is on a small subset of topics such as possible quantum gravity effects, non-commutative geometry, and gravastars. A simple singularity theorem is also presented in the appendix. Although parts of the manuscript assume some familiarity with relativistic gravitation or differential geometry, the aim is for the broad picture to be accessible to non-specialists of other physical sciences and mathematics. 
\end{abstract}
\rule{\linewidth}{0.2mm}
\vspace{-1mm}
\noindent{\small PACS numbers: 04.20.Dw\;\; 02.40.Xx\;\; 97.60.Lf\;\; 98.80.Jk}\\
{\small Key words: curvature singularities, black holes, cosmology}\\

\section{Introduction}

Despite its ubiquity, gravitation remains one of the most enigmatic phenomena in the physical sciences. There is, for example, its universal nature, not shared by the other fundamental forces. There is also its direct connection to causality and causal structure; Gravitation determines what is past, present, and future. In fact, our best theories of gravity so far (of which general relativity still comes out on top even after over a century of scrutiny) indicate that gravity is not a force in the  traditional sense at all. For much of this manuscript it is assumed that gravitation, at least in the low energy arena, is described by the theory of general relativity, which is governed by the Einstein equations
\begin{equation}
 G_{\mu\nu}:=R_{\mu\nu}-\frac{1}{2}R\,g_{\mu\nu} + \Lambda g_{\mu\nu}=8\pi T_{\mu\nu}\,. \label{eq:einsteq}
\end{equation}
For the non-gravitational specialist, in Equation (\ref{eq:einsteq}), $R_{\mu\nu}$ represents the components of the Ricci curvature tensor, $R:=R^{\alpha}_{\;\alpha}$ is its invariant trace (the Ricci scalar), $g_{\mu\nu}$ is the components of the metric (essentially the gravitational potentials), and $T_{\mu\nu}$ is the stress--energy tensor of the gravitating material. The signature of the metric is $+2$ (one negative metric eigenvalue and three positive values) and Greek indices ($0 \rightarrow 3$) spanning the full spacetime, with the zeroth component assumed to be time-like (measuring time) and the $1,\,2,\,3$ components being space-like (measuring space). Roman indices will denote only the spatial components of the tensors. $\Lambda$ is the cosmological constant, which is generally assumed to be zero in this paper unless otherwise indicated. The Einstein tensor $G_{\mu\nu}$ is identically divergence-free such that $\nabla_{\mu}G^{\mu}_{\;\,\nu}\equiv 0$, which via Equation (\ref{eq:einsteq}) implies the \emph{conservation law}
\begin{equation}
 \nabla_{\mu}T^{\mu}_{\;\;\,\nu}=0\,. \label{eq:conslaw}
\end{equation}

In all expressions in this manuscript, the speed of light is set equal to one, as are the gravitational constant and Planck's constant. We concentrate here on the theory of general relativity, and not all results in this manuscript hold if the theory of gravity is not general relativity.

The theory of general relativity has been highly successful. However, like many classical theories in physics, some of its solutions are singular, and these solutions, singularity aside, are not just academic curiosities but physically viable solutions to the equations of motion. It is desirable therefore to try to eliminate these singularities in some physically acceptable way. There is, of course, a long history to this arena of study, and here we only briefly describe a few of the many possible methods of singularity avoidance. It is hoped that this will give the reader some understanding of the problem and some of the methods which can be employed in attempting to rectify it, along with some of the difficulties encountered in the attempt. This manuscript is based on a seminar given at the International Conference on Gravitation, Astrophysics and Cosmology 2024, where the audience consisted of physical scientists and mathematicians. Some familiarity with the topic of relativistic gravity and differential geometry is assumed, but these details can just be accepted by non-specialists, and it is hoped that the overall picture will be accessible to a wider audience of scientists and mathematicians than just gravitational specialists. 

This paper is laid out in two main sections. In Section \ref{sec:hist}, we briefly discuss the history of some important gravitational solutions which possess some sort of singularity. These include the famous Schwarzschild and Kerr solutions and Friedmann--Lema\^{i}tre--Robertson--Walker metrics. We further briefly describe the physical process which could lead to the formation of a black hole. In the following section (Section \ref{sec:alternatives}), we discuss some possible ways that the singularities can be circumvented. An appendix is included which presents a simple singularity theorem illustrating how certain types of singularities can occur.

\section{A history of some singular solutions in gravity}\label{sec:hist}
Here, we will present a brief history of some of the most important gravitational field solutions which possess some sort of singularity. The review here will be in chronological order, and therefore we will start with the result of Isaac Newton in his famous \textit{Principia}~\cite{ref:principia}
\begin{equation}
 V(r)=-G\frac{M}{r}\,, \label{eq:Newtgravpot}
\end{equation}
where $V$ represents the gravitational potential outside of a spherically symmetric body of a total mass $M$. The above potential leads to a gravitational force on an external mass $m$ of the form
\begin{equation}
 \mathbf{F}=-G \frac{Mm}{r^{2}} \hat{\mathbf{r}}\,. \label{eq:Newtforce}
\end{equation}

From the expressions above, it is clear that there will be an infinite gravitational force at $r=0$ in the case where $M$ is a point mass.

Let us now move on to general relativity, governed by Equation (\ref{eq:einsteq}). As in the Newtonian analysis above, let us concentrate on situations exhibiting spherical symmetry, and we will also assume that the potentials are time-independent. In this scenario, the metric tensor, which as mentioned above encodes the gravitational potentials, has the following components for spherical coordinates:
\begin{equation}
 ds^{2}=g_{\mu\nu}\textrm{d}x^{\mu}\textrm{d}x^{\nu}=-e^{\gamma(r)}\textrm{d}t^{2} + e^{\alpha(r)} \textrm{d}r^{2} +r^{2} \textrm{d}\theta^{2} + r^{2}\sin^{2}\theta\, \textrm{d}\phi^{2}\,. \label{eq:sphmet}
\end{equation}
with $x^{0}=t$, $x^{1}=r$, $x^{2}=\theta$, and $x^{3}=\phi$.
If we calculate the Ricci tensor and Ricci scalar with the metric components in the above line element and use them in the Einstein equation (Equation (\ref{eq:einsteq})), then the following equations result:
{\allowdisplaybreaks\begin{align}
&r^{-2} \cdot \left[1-e^{-(\alpha +\gamma)}
\cdot \partial _1\left(r e^\gamma \right)\right] + 8\pi \,T^1_{\;\;\,1}(r)
= 0\,, \label{eq:spheinstr} \\[0.45cm]
&e^{-\alpha } \cdot \Big[-(1/2)\, \partial_1
\partial _1\gamma - (1/4) (\partial _1\gamma )^2 \nonumber \\[0.1cm]
\;& \qquad\;\;\;+ (1/2)\, r^{-1} \left(\partial _1\alpha -\partial _1 \gamma
\right) + (1/4) \,\partial _1 \alpha \cdot \partial _1 \gamma \Big] + 8\pi \,T^2_{\;\;\,2}(r) = 0\,,
\label{eq:spheinsttheta} \\[0.45cm]
&r^{-2} \cdot \big[1-\partial _1\left(r
e^{-\alpha }\right)\big] + 8\pi \,T^0_{\;\;\,0}(r) = 0\,. \label{eq:spheinstt}
\end{align}}

\hspace*{-1mm}The $G^{3}_{\;\,3}$ equation in this coordinate system is identical to the $G^{2}_{\;\,2}$ equation and thus is not written.

The conservation of stress--energy law (Equation (\ref{eq:conslaw})) only has one non-trivial component in this scenario:
\begin{align}
 \frac{r}{2} \nabla_{\mu}T^{\mu}_{\;\;\,1} = &  -T^2_{\;\;\,2}(r) + (r/2) \cdot \partial_1
T^1_{\;\;\,1} + \big[1+(r/4) \cdot \partial _1 \gamma \big] T^1_{\;\;\,1}(r)
 \nonumber \\[0.1cm]
& - (r/4) \cdot \partial _1 \gamma \cdot T^0_{\;\;\,0} (r)
= 0\,.  \label{eq:sphconslaw}
\end{align}

We will treat two of the components of the matter distribution, encoded in $T^{\mu}_{\;\;\,\nu}$, as being prescribable, since this is the level at which the system of equations is underdetermined. In this scenario, we can solve for the metric functions $\alpha(r)$ and $\gamma(r)$, and the remaining stress-energy components ($T^{2}_{\;\;\,2}=T^{3}_{\;\;\,3}$) as follows:
\begin{enumerate}
\item Prescribe $T^0_{\;\;\,0}(r)$, which is ``minus the energy density in the comoving frame'' here, and solve Equation (\ref{eq:spheinstt}) for  $e^{-\alpha (r)}$.

\item Now that $e^{-\alpha (r)}$ is known, prescribe $T^1_{\;\;\,1}(r)$, which is the ``radial pressure in the comoving frame'' here, and solve the equation made up of the linear combination of  Equation (\ref{eq:spheinstr}) minus Equation (\ref{eq:spheinstt}) in order to find $e^{\gamma (r)}.$

\item The stress--energy tensor component $T^{2}_{\;\;\,2}$ (in this case, the transverse pressure in the comoving frame) will be \emph{defined} by  the conservation law in Equation (\ref{eq:sphconslaw}).
\end{enumerate}
The above procedure, due to J.~L.~Synge \cite{ref:syngebook} and extended by A.~Das yields the following solution (see \cite{ref:dasdebbook} for details, although it is a fairly straight-forward manipulation)
{\allowdisplaybreaks\begin{align}
e^{-\alpha (r)} =\;& 1+\frac{1}{r} \left[8\pi
\int^{r} T^0_{\;\;\,0}(r')\, r^{\prime 2} \textrm{d}r'\right] \,,
\label{eq:alphasol}  \\[0.4cm]
e^{\gamma (r)} =\;& e^{-\alpha (r)} \cdot \exp \left\{8\pi
\int^{r} \big[T^1_{\;\;\,1}(r') - T^0_{\;\;\,0}(r')\big] \cdot
e^{\alpha (r')} \cdot r' \,\textrm{d}r'\right\} , \label{eq:gammasol}
\\[0.4cm]
T^2_{\;\;\,2}(r) \equiv\;& T^3_{\;\;\,3}(r) := (r/2) \cdot \partial _1
T^1_{\;\;\,1} + \big[ 1+(r/4) \cdot \partial _1\gamma \big] \cdot
T^1_{\;\;\,1}(r) \nonumber \\[0.1cm]
\;& \qquad\quad\;\;\; - (r/4) \cdot \partial _1 \gamma \cdot T^0_{\;\;\,0}
(r)\,. \label{eq:angpresssol}
\end{align}}

Now, let us look for the general relativistic analog of the Newtonian gravitational potential (\ref{eq:Newtgravpot}). We recall that the Newtonian gravitational potential was that due to a point mass $M$ located at $r=0$. Let's therefore use such a distribution in (\ref{eq:alphasol}) - (\ref{eq:angpresssol}). That is, let us prescribe
\begin{equation}
 T^{0}_{\;\,0}=-M\frac{\delta(r')}{4\pi {r'}^{2}}\,, \label{eq:schwstress}
\end{equation}
recalling that $T^{0}_{\;\;0}$ in this coordinate system corresponds to minus the energy density in the comoving frame of the gravitating material.
The factor of $4\pi {r'}^{2}$ comes from the definition of the spherical coordinate delta function when the impulse is located at $r'=0$ which is coincident with a boundary of the integration domain\footnote{It may seem that a factor of $\sqrt{|g_{11}|}$ is missing. This is due to the same factor being missing in the measure of the integral in (\ref{eq:alphasol}). A more physical explanation for these factors comes from the ADM mass integral of the spacetime. See, for example, \cite{ref:grqc0703035v1}.}. If one instead uses another common definition where $\int_{0}^{\epsilon} \delta(r')\, \textrm{d}r'=1/2$ instead of $1$ then the delta function is written as $\delta(r)/(2\pi {r'}^{2})$  (see for example \cite{ref:sphdelt}, \cite{ref:bracewell}). Using (\ref{eq:schwstress}) in (\ref{eq:alphasol}) immediately yields
\begin{equation}
 e^{-\alpha(r)}=1-\frac{2M}{r} \,. \label{eq:schwalph}
\end{equation}
Now, according to the solution scheme outlined above, we still need to prescribe the radial pressure $T^{1}_{\;\,1}$. For this, we will appeal to the junction conditions. Namely, Synge's junction condition in static spherical symmetry dictates that the radial pressure be continuous at all points in spacetime. Since away from the point mass, the radial pressure (and all stress--energy components) are zero, we will use Synge's junction condition of radial pressure continuity to tell us that at the point mass the radial pressure should be zero; that is, $T^{1}_{\;\,1}(0)=0$. Admittedly though, it is somewhat dubious to rely on a junction condition at a singular point. With this choice of vanishing $T^{1}_{\;\,1}$, we find from Equation (\ref{eq:gammasol}) that
\begin{equation}
 e^{\gamma(r)}=\left(1-\frac{2M}{r}\right)e^{h_{0}}\,, \label{eq:schwgam1}
\end{equation}
with $h_{0}$ as a constant. This constant can be absorbed in the rescaling of the $t$ coordinate in Equation (\ref{eq:sphmet}), where $e^{h_{0}/2}t \rightarrow t$. At this stage, we have the metric (gravitational potentials) in general relativity for a ``point mass'' at $r=0$:
\begin{equation}
 ds^{2}=-\left(1-\frac{2M}{r}\right)\,\textrm{d}t^{2} + \frac{\textrm{d}r^{2}}{1-\frac{2M}{r}} + r^{2}\,\textrm{d}\theta^{2} + r^{2}\sin^{2}\theta\, \textrm{d}\phi^{2}\,. \label{eq:schwmet}
\end{equation}

The above is the famous Schwarzschild solution. Schwarzschild wrote the line element in $-2$ signature and in a slightly different way in his original paper~\cite{ref:schworig}, and this is illustrated in Figure~\ref{fig:schworig}, which is taken directly from the original publication:
\begin{figure}[H]
\includegraphics[width=1.00\textwidth, clip, keepaspectratio=true]{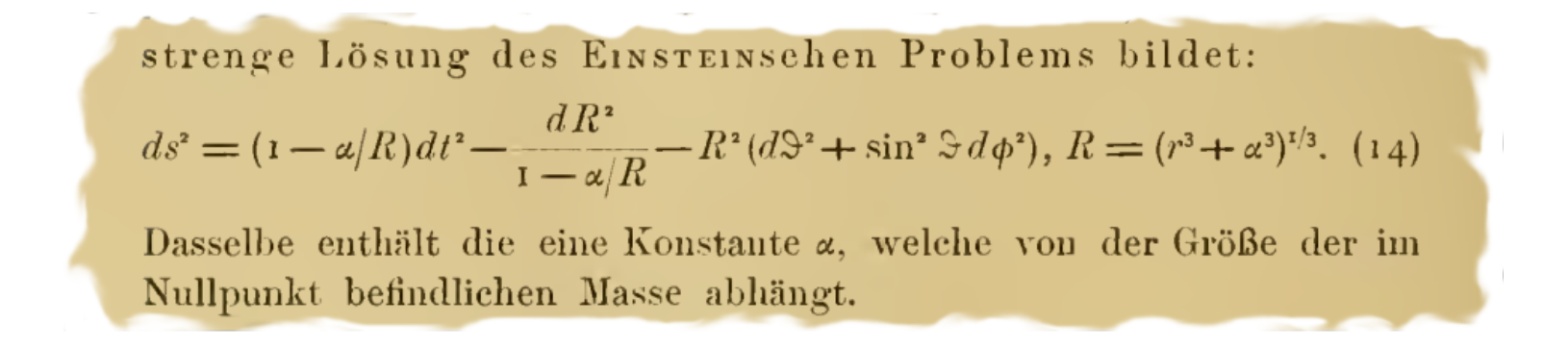}
\caption{An excerpt from K.~Schwarzschild's original paper~\cite{ref:schworig}. {\underline{Please note}} that $R$ in Schwarzschild's paper is analogous to $r$ in this manuscript. Note that negative values of $r$ in this figure are allowed by extension, taking us down to $R=0$ (the singularity).  As we can see in this figure, there is a $C^{2}$ coordinate transformation between Schwarzschild's $R$ coordinate and Schwarzschild's $r$ coordinate ($\partial R/\partial r=r^{2}/(\alpha^{3}+r^{3})^{2/3}$, $\partial^{2}R/(\partial r)^{2}=2\alpha^{3}r/(\alpha^{3} +r^{3})^{5/3}$, $\partial r/\partial R=R^{2}/(R^{3}-\alpha^{3})^{2/3}$, $\partial^{2}r/(\partial R)^{2}=-2\alpha^{3}R/(R^{3}-\alpha^{3})^{5/3}$), save for at the singular point $r=-\alpha$\; ($R=0$), which Schwarzschild did not consider. Figure included with the kind permission of the Berlin-Brandenburg Academy of Sciences and Humanities---Academy Library---Shelfmark: Z 350---1916{,}
1 (Akademiearchiv, Berlin-Brandenburgischen Akademie der Wissenschaften).}
\label{fig:schworig}
\end{figure}

Related to the issue of singularities, we notice that the metric in Equation (\ref{eq:schwmet}) has two singular points (not including the benign issue at the poles $\theta=0,\, \pi$). These are located at $r=2M$ and $r=0$. In order to study these possible pathologies, we can look at the Riemann curvature tensor components, which essentially encode the ``strength'' of the gravitational field via the amount of spacetime curvature it produces. These components are not invariant though and are therefore subject to coordinate effects, meaning that pathologies found in the Riemann tensor components might be due to the coordinates utilized and not a true pathology. In order to alleviate this problem, we shall introduce an orthonormal tetrad, a set of four ($\hat{\alpha}=0,\,1,\,2,\,3$) orthonormal vectors whose components ($\mu=0,\,1,\,2,\,3$) will be denoted as $e^{\mu}_{\;\hat{\alpha}}$. The hatted indices represent orthonormal components, meaning that they are raised and lowered by the orthonormal (i.e., Minkowski) metric. We shall choose for our set of vectors a set pointing in the coordinate directions; that is, we will pick
\begin{align}
 e^{\mu}_{\;\hat{r}} & =\delta^{\mu}_{\;r}{\sqrt{|g^{11}|}}=\delta^{\mu}_{\;r}\sqrt{\left|1-\frac{2M}{r}\right|},\quad  e^{\mu}_{\;\hat{\theta}}={\delta^{\mu}_{\;\theta}}{\sqrt{|g^{22}|}}=\delta^{\mu}_{\;\theta}\frac{1}{r}\,, \nonumber \\
 e^{\mu}_{\;\hat{\phi}} & ={\delta^{\mu}_{\;\phi}}{\sqrt{|g^{33}|}}=\delta^{\mu}_{\;\phi}\frac{1}{r\sin\theta},\quad e^{\mu}_{\;\hat{t}} =\delta^{\mu}_{\;t}{\sqrt{|g^{00}|}}=\delta^{\mu}_{\;t}\frac{1}{\sqrt{\left|1-\frac{2M}{r}\right|}}\,. \label{eq:schwtet}
\end{align}
This set is orthonormal since it satisfies
\begin{equation}
\left[e^{\mu}_{\;\,\hat{\alpha}}e_{\mu\hat{\beta}}\right]= [\eta_{\hat{\alpha}\hat{\beta}}]=\mbox{diag}\left[-1,\,1,\,1,\,1\right]\,, \nonumber
\end{equation}
the Minkowski metric $[\eta_{\hat{\alpha}\hat{\beta}}]$. 

Projecting the Riemann curvature tensor into the orthonormal frame,
\begin{equation}
  R_{\hat{\alpha}\hat{\beta}\hat{\gamma}\hat{\delta}}=R_{\mu\nu\rho\sigma}e^{\mu}_{\;\hat{\alpha}}e^{\nu}_{\;\hat{\beta}}e^{\rho}_{\;\hat{\gamma}}e^{\sigma}_{\;\hat{\delta}}\,, \nonumber
\end{equation}
yields the following singular components (plus those related by index symmetry and components that are finite)
\begin{align}
 R_{\hat{t}\hat{r}\hat{t}\hat{r}} & =-\frac{2M}{r^{3}} = -R_{\hat{\theta}\hat{\phi}\hat{\theta}\hat{\phi}}\nonumber \\
 R_{\hat{t}\hat{\theta}\hat{t}\hat{\theta}} & =\frac{M}{r^{3}} = R_{\hat{t}\hat{\phi}\hat{t}\hat{\phi}}=-R_{\hat{r}\hat{\theta}\hat{r}\hat{\theta}}=-R_{\hat{r}\hat{\phi}\hat{r}\hat{\phi}} \,. \label{eq:schworthriem}
\end{align}
We notice immediately from (\ref{eq:schworthriem}) that $r=0$ possesses a curvature singularity. However, the other potentially problematic surface, $r=2M$, does not. As is well-known, the event horizon ($r=2M$) is non-singular.

Moving now into the 1920's and 30's, scientists wished to apply the then new theory of general relativity to the universe as a whole. A couple of assumptions went into this endeavor: Due to the difficulty of the problem, it was assumed that, on the very largest scales, the universe is homogeneous and isotropic. In the geometric language of general relativity these assumptions mean that the spatial geometry of the universe is of constant curvature. That is, at a fixed instant of time, the universe is constantly curved and the choices for constant curvature are of course zero curvature (flat), constant positive curvature, or constant negative curvature. There are a set of mathematical conditions that describe constant curvature spaces which lead almost directly to the metric compatible with these assumptions. We shall not describe them here but the interested reader may find them in \cite{ref:dasdebbook}. The resulting metric is the one which yields the following line element
\begin{equation}
 {\rm d}s^{2}=-{\rm d}t^{2} + a^{2}(t)\left[\frac{{\rm d}r^{2}}{1-k\,r^{2}} +r^{2\,}{\rm d}\theta^{2} +r^{2}\sin^{2}\theta\, {\rm d}\phi^{2}\right]\,. \label{eq:flrwmet}
\end{equation}
The above famous metric is known as the Friedmann-Lema\^{i}tre-Robertson-Walker metric \cite{ref:F}, \cite{ref:L}, \cite{ref:R}, \cite{ref:W}. The constant $k$ determines whether the $t=$const. spatial slices are flat ($k=0$), negatively curved ($k< 0$), or positively curved ($k>0$). The function $a(t)$ is arguably the most important part of (\ref{eq:flrwmet}), since its behavior tells us how the geometry of the spatial slices evolve with time. 

An orthonormal tetrad may be constructed for this metric and the Riemann curvature tensor's components projected in the orthonormal frame can be computed. We shall not go into the same level of detail as with the Schwarzschild metric, but the important results for us are summarized in the following two components, where the tetrad has been taken along the coordinate directions:
\begin{equation}
 R_{\hat{t}\hat{r}\hat{t}\hat{r}}=- \frac{\ddot{a}(t)}{a(t)}, \quad R_{\hat{r}\hat{\theta}\hat{r}\hat{\theta}}= \frac{\dot{a}^{2}(t) +k}{a^{2}(t)}\,. \label{eq:orthriemflrw}
\end{equation}
We see that if $a(t)$ reaches zero, then we will have a curvature singularity, unless $\ddot{a}(t) \to 0$  and $\dot{a}^{2}(t) \to -k$ as $a(t)\to 0$. This is the famous Big Bang singularity. Of course, if these conditions can be met, or $a(t)=0$ can be avoided, then we can circumvent the singularity. Unfortunately, when solving the Einstein field equations using physically reasonable $T^{\mu}_{\;\;\,\nu}$ components, these conditions generally cannot be avoided, and a Big Bang singularity is predicted at some finite time in the past.

Going back to the Schwarzschild solution now. Recall that we derived the Schwarzschild solution (\ref{eq:schwmet}) under the assumption that the source term was a the stress-energy of a ``point mass'' at $r=0$. The result was a singular spacetime. Nature however may very well not like true point masses. Instead, one should more realistically consider an extended body, like a star or nebula, that collapses under its own gravitational attraction, and see what sort of metric results when Einstein's equations are applied to this scenario. Such a study was first published in 1939 in the seminal paper by J.~R. Oppenheimer and H. Snyder entitled \emph{On Continued Gravitational Contraction} \cite{ref:opsnyd}. 

Due to the extreme complication of the model, Oppenheimer and Snyder assumed that the gravitating material was a pressureless ``dust'', although some analysis in the paper was performed for non-zero pressure. Outside of the spherically collapsing matter, the vacuum region, the metric was taken to be that of Schwarzschild (\ref{eq:schwmet}). This is not really an assumption since it can be fairly easily shown that the Schwarzschild solution is unique in general relativity in describing a spherically symmetric vacuum. This is a result of the Birkhoff-Jebsen theorem \cite{ref:jeb} \cite{ref:birk}. Oppenheimer and Snyder found in their study that to an external observer the matter collapses and asymptotically approaches the radius $r=2M$, with the redshift factor going to infinity as $r=2M$ is approached. Also, photons emitted from the collapsing matter can escape to infinity over a narrower and narrower range of angles as the collapse progresses. Eventually, in finite time in the comoving frame of the matter, no photons from the matter could be emitted which would reach out to an observer beyond $r=2M$. That is, the collapsing star will eventually appear ``dark''. This was the formation of a black hole, although not called that at the time, with an event horizon at $r=2M$. There was no real discussion of the singularity at $r=0$, but its eventual formation, under mild assumptions (in the Oppenheimer-Snyder model the positivity of the energy density) would be mathematically inevitable.

We now fast-forward to 1963. Roy Kerr studied a vacuum gravitational field which could possibly exist due to a rotating mass~\cite{ref:kerr}. It was later discovered by B. Carter~\cite{ref:carter}, R. Penrose~\cite{ref:penrose}, and R. Boyer and R. Lindquist~\cite{ref:boyerlind} that this metric actually describes a spinning black hole, and furthermore, the solution is unique in the realm of pure vacuum gravity. In the Boyer--Lindquist coordinates, the Kerr line element is written as
\begin{align}
ds^{2}& = g_{\mu\nu}dx^{\mu}dx^{\nu} =-\frac{a^{2} \cos^{2}\theta-2 M  r +r^{2}}{a^{2} \cos^{2}\theta+r^{2}}\,\textrm{d}t^{2}  + \frac{a^{2} \cos^{2}\theta+r^{2}}{a^{2}-2 M  r +r^{2}}\, \textrm{d}r^{2} \nonumber \\
&+\left(a^{2} \cos^{2}\theta+r^{2}\right)\, \textrm{d}\theta^{2} -\frac{4 M  a r \sin^{2}\theta}{a^{2} \cos^{2}\theta+r^{2}}\, \textrm{d}t\, \textrm{d}\phi \nonumber \\
&+ \frac{\sin^{2}\theta \left[\left(a^{2}+r \left(r -2 M  \right)\right) a^{2} \cos^{2}\theta+r \left(\left(r +2 M  \right) a^{2}+r^{3}\right)\right]}{a^{2} \cos^{2}\theta+r^{2}}\, \textrm{d}\phi^{2}\,. \label{eq:kerrmet}
\end{align}

Here, $a$ represents the angular momentum per unit mass, a direct measure of the black hole's spin. The above line element reduces to the Schwarzschild one in Equation (\ref{eq:schwmet}) when $a=0$.

Regarding the possible existence of curvature singularities, we could compute the Riemann tensor components and project them onto an orthonormal frame. However, the results are unwieldy. Instead we calculate the Kretschmann scalar of the Kerr metric,\footnote{If the Kretschmann scalar is infinite there is an infinite pathology in the curvature. If it is finite, there still might be a curvature singularity. In this sense the Kretschmann scalar is a less robust measure of curvature singularities than studying the individual components of the Riemann tensor in an orthonormal frame.} 
\begin{equation}
 K:=R_{\mu\nu\rho\sigma}R^{\mu\nu\rho\sigma} = \frac{48 M^{2}\left ( r^{6} -15a^{2}r^{4}\cos^{2}\theta +15a^{4}r^{2}\cos^{4}\theta - a^{6}\cos^{6}\theta\right)}{\left(r^{2}+a^{2}\cos^{2}\theta\right)^{6}}\,. \label{eq:kerrkretsch}
\end{equation}
At $\theta=\pi/2$, we have
\begin{equation}
 K=\frac{48M^{2}}{r^{6}}\,,
\end{equation}
heralding a curvature singularity in the Kerr black hole at $r=0$, $\theta=\pi/2$.

Black hole-like objects were predicted, although not in their relativistic guise, by John  Michell in a work written in 1783 and printed in 1784~\cite{ref:michell} and by Laplace in 1796 in his \textit{Exposition du Syst\`{e}me du Monde}~\cite{ref:laplace}. An excerpt from Michell's work is shown in Figure~\ref{fig:michell}.
\begin{figure}[htbp]
\begin{center}
\includegraphics[width=0.85\textwidth, clip, keepaspectratio=true]{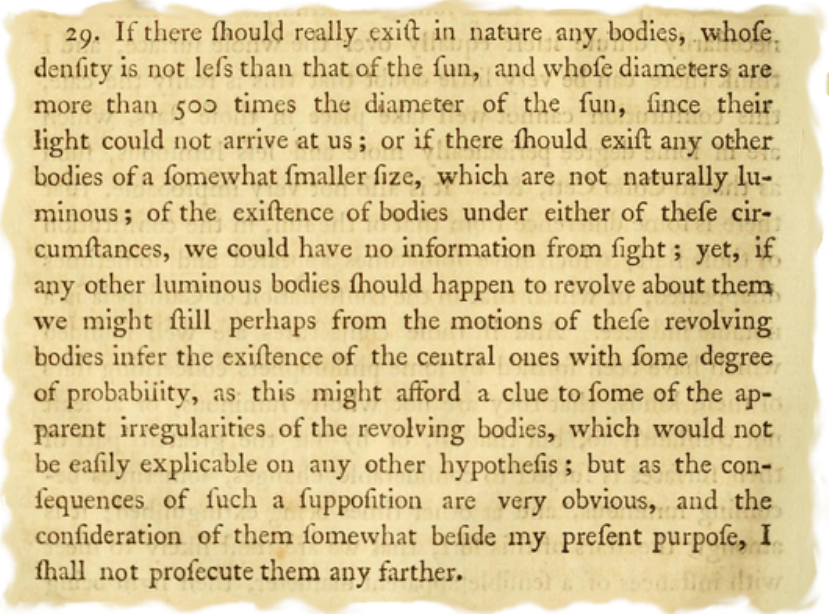}
\caption{{\small{An excerpt from Michell's original 1784 paper \cite{ref:michell}: \emph{If there should really exist in nature any bodies, whose density is not less than that of the sun, and whose diameters are more than 500 times the diameter of the sun, since their light could not arrive at us; or if there should exist any other bodies of a somewhat smaller size, which are not naturally luminous; of the existence of bodies under either of these circumstances, we could have no information from sight; yet, if any other luminous bodies should happen to revolve about them we might still perhaps from the motions of these revolving bodies infer the existence of the central ones with some degree of probability, as this might afford a clue to some of the apparent irregularities of the revolving bodies, which would not be easily explicable on any other hypothesis; but as the consequences of such a supposition are very obvious, and the consideration of them somewhat beside my present purpose, I shall not prosecute them any further.} Copyright Status: Not in copyright.}}}
\label{fig:michell}
\end{center}
\end{figure}

Before ending this historical exposition we mention here that the term black hole for these objects is often attributed to J.~A. Wheeler, around 1967 or 1968. However, this phrase was in  existence at least since 1964 in Science News \cite{ref:scinews}. The phrase was also allegedly used at the 1963 Texas Symposium on Relativistic Astrophysics \cite{ref:symposium} and Wheeler may have been present at the meeting, so it still could have been Wheeler who coined the term black hole. Others attribute the phrase to R. Dicke \cite{ref:whonamedbh}.

\subsection{Astrophysical formation of singular spacetimes}\label{subsec:formation}
The above discussion mainly described the mathematics of certain singular solutions. However, it is important to keep in mind that if the singularities discussed have no way of physically forming, then arguably from a physics perspective they are purely of an academic nature. For this we will look at a very simplified version of the end of a main sequence star's life, keeping in mind though that not all stars will follow this path. 

A star will eventually run out of nuclear fuel and reach a state where its pressure can no longer counteract gravity to maintain dynamic equilibrium. The star will begin to contract under its own weight. Such a star may supernova, and this may leave behind a (non singular) white dwarf, which is a body supported by degenerate electron matter. In some cases though the remnant core may be massive enough that the degeneracy pressure is insufficient to support the core, and further contraction will occur. The mass threshold where this occurs is called the Chandrasekhar mass \cite{ref:chandra} which is approximately (with $G$, $\hbar$ and $c$ not set to unity)
\begin{equation}
M_{Chand}\approx \frac{\alpha_{0}}{2 (\mu_{e}m_{H})^{2}} \sqrt{3\pi} \left(\frac{\hbar c}{G}\right)^{3/2} \label{eq:chandmass}
\end{equation}
Here $\mu_{e}$ is the average molecular weight per electron, $m_{H}$ is the mass of a hydrogen atom, and $\alpha_{0}$ is a constant that arises via the Lane–Emden equation of astrophysics. 

What happens beyond the Chandrasekhar limit is that protons and electrons combine via the reaction
\begin{equation}
 p^{+}+e^{-} \to n +\nu_{e}\,, \label{eq:pecombine}
\end{equation}
forming neutrons and electron-neutrinos. This combination allows further contraction to form a neutron star, now supported by neutron degeneracy. However, if the neutron star is massive enough, exceeding the Oppenheimer-Volkoff limit \cite{ref:oppvol}, even this will not halt the collapse. If this happens other degeneracies could perhaps support the star (e.g.. quark degeneracy), but still, above some mass limit continued collapse will be inevitable. In general relativity the problem is augmented by the fact that the pressure itself, in the stress--energy tensor, is a source of gravity, and not simply just the mass density as in Newtonian gravity. Regardless of the scenario, it seems that above some pressure and energy density limit, black hole formation is inevitable unless some exotic physics comes into play.

\section{Possible resolutions to singular spacetimes}\label{sec:alternatives}
We will present in this section some possible alternatives to the types of singular spacetimes discussed in the previous section. As mentioned previously, this discussion and coverage are far from exhaustive. 

\subsection{Quantum gravity effects} \label{subsec:qg}
One issue with the field of gravitation is that practical experiments to measure gravitational effects are notoriously difficult to perform. The main reason for this is that gravity itself is so weak. One cannot set up an experiment in a laboratory to directly measure gravity's strong field effects. The singularities we have been discussing here are heralded by an infinity in the (orthonormal) curvature, and at a large curvature, the gravitational field is strong. Now, the theory of general relativity has been thoroughly tested in the weak field limit, such as in the realm of the solar system~\cite{ref:grtests}. It is also becoming fairly well tested in the stronger field regime~\cite{ref:bhgrtests}. However, when we are talking about singularities, we are in the realm of extremely strong gravity. It is possible that in this regime, general relativity is not the correct theory of gravity, and instead, some other theory takes over. An obvious candidate would be the theory of quantum gravitation. Unfortunately, to date, no such fully developed theory exists, although there are a number of candidates. The quantization of general relativity is fraught with problems. In the canonical 3 + 1 formulation, it is plagued by divergences which cannot be addressed in the same way as, say, the theory of quantum electrodynamics. 

One promising candidate theory of quantum gravity is loop quantum gravity (LQG) \cite{ref:lqgrev}. The theory is essentially a theory of quantum geometry and is a direct attempt at quantizing general relativity. At extremely high energies, perhaps near the Planck scale, it is expected that the effects of a quantum theory of gravity manifest themselves. However, not having a fully workable theory at hand, one can look for an effective classical theory that encompasses some of the quantum effects. 

We will be dealing within a symmetry reduced theory. That is, we will assume from the start that the metric has a form similar to (\ref{eq:schwmet}), but the metric functions will not be the same. We write the line element (\ref{eq:schwmet}) in new variables as:
\begin{equation}
 ds^{2}=g_{\mu\nu}dx^{\mu}\textrm{d}x^{\nu} = -N^{2}\, \textrm{d}\tau^{2} +\frac{(E_{2})^{2}}{E_{3}} \textrm{d}y^{2} + E_{3}\,\textrm{d}\varrho^{2} + E_{3} c_{0}\sinh^{2}(\sqrt{d_{0}}\varrho)\,\textrm{d}\varphi^{2}\,, \label{eq:tetradline}
\end{equation}
where $N$, $E_{2}$ and $E_{3}$ are functions of \emph{time}. They are functions of time because, examining (\ref{eq:schwmet}), the classical singularity is located in the domain $r<2M$ inside the event horizon. In the coordinate chart of (\ref{eq:schwmet}) $g_{rr}<0$ and $g_{tt}>0$ for $r<2M$ so that the $r$ coordinate is actually timelike (measures time) and the $t$ coordinate is spacelike (measures distances). The Schwarzschild metric for $r < 2M$ is therefore actually time dependent, and homogeneous. Since we are going to be dealing with domains close to the singularity here, the metric functions are taken as time dependent, and we denote this time as $\tau$.

The $c_{0}$ and $d_{0}$ factors in (\ref{eq:tetradline}) control the geometry of the $\tau=$const. and $y=$const. subspaces. These are the cases: 
\begin{enumerate}
\item $c_{0}=-1$, $d_{0}=-1$: For this choice the $(\varrho,\,\phi)$ sub-manifolds are spheres.
\item $\underset{d_{0}\rightarrow 0}\lim\,c_{0}=\frac{1}{d_{0}}$, $d_{0}=0$: In this case $(\varrho,\,\phi)$ sub-manifolds are toroidal (and the sub-manifolds for this case are intrinsically flat). 
\item $c_{0}=1$, $d_{0}=1$: In this case $(\varrho,\,\phi)$ sub-manifolds are surfaces of constant negative curvature of genus $g > 1$. Such surfaces may be compact or not \cite{ref:topoend}, \cite{ref:nakahara}. 
\end{enumerate}
%

We will be dealing within the Hamiltonian formalism and we omit the technical details save for the few that are required to understand the results. For the scenarios above, the gravitational Hamiltonian is given by
\begin{equation}
S=-\frac{N\sqrt{c_{0}}}{2\sqrt{d_{0} E_{3}} \gamma^{2}}\left[\left((a_{2})^{2}-\gamma^{2}d_{0}\right)E_{2}+2E_{3} a_{2}a_{3}\right]
+\frac{N}{2\sqrt{E_{3}}\gamma^2}\Lambda E_{2}E_{3}\,.\label{eq:ourhamconst}
\end{equation}
Here $\Lambda$ is the cosmological constant (required for the toroidal and higher genus scenarios) and the $a_{i}$ which are functions of time, are components of an $su(2)$ connection which are conjugate to the $E_{i}$ in the same way that positions and canonical momenta are conjugate to each other in Hamiltonian mechanics. $\gamma$ is a number, known as the Barbero-Immirzi parameter, which can be set by studying black hole entropy (see, for example, \cite{ref:immirzi}).

We will apply some quantum corrections inspired from LQG to the above Hamiltonian. Specifically we will consider here what are known as holonomy corrections \cite{ref:holonomycors1}, \cite{ref:holonomycors2}. This amounts to making the following substitution in (\ref{eq:ourhamconst}):
\begin{equation}
a_{i} \rightarrow \frac{\sin(a_{i} \delta_{i})}{\delta_{i}}\,.  \label{eq:holonomysubs}
\end{equation}
There are several schemes in the literature for how to pick the quantities $\delta_{i}$. In early studies $\delta_{i}$ was taken to be constant \cite{ref:deltaconst}, \cite{ref:modesto}. Other schemes were later developed that improved the dynamics \cite{ref:deltaconst}, \cite{ref:improvedholonomy}, \cite{ref:chiou}, \cite{ref:reviews}. In this summary we shall present the results in \cite{ref:ourlqgpaper} where the $\delta_{i}$ are given by those similar to the improved dynamics in \cite{ref:deltaconst}:
 \begin{equation}
 \delta_{2}=\left(\frac{\Delta}{E_{3}}\right)^{1/2}\,, \;\;\;\; \delta_{3}=\frac{\sqrt{E_{3}\, \Delta}}{E_{2}}\,, \label{eq:deltapresc}
\end{equation}
where $\Delta$ is taken to be be a constant (usually chosen to be roughly the Planck area).

Substituting (\ref{eq:deltapresc}) into (\ref{eq:holonomysubs}), and then replacing the $a_{i}$ in (\ref{eq:ourhamconst}) with (\ref{eq:holonomysubs}) yields the quantum corrected Hamiltonian. It is then a matter of evolving the variables $E_{i}$ and $a_{i}$ according the Hamiltonian equations of motion $\dot{a}_{2}=\left\{a_{2},\,S\right\}$, etc. In particular, what is of interest is the variable $E_{3}$ since $E_{3}=0$ would herald a curvature singularity analogous to the one in the Schwarzschild metric at $\tau = 0$ (which was $r=0$ in our discussion of the Schwarzschild metric earlier). Qualitatively, what is happening  when $E_{3}=0$ is the following: Note from the line-element (\ref{eq:tetradline}) that volumes of the $(\varrho,\,\phi)$ sub-manifolds are given by
\begin{equation}
V_{\mbox{\footnotesize{$(\varrho,\,\phi)$}}} = E_{3}(\tau) \int \sqrt{c_{0}}\sinh(\sqrt{d_{0}}\varrho)\, \textrm{d}\varrho\,\textrm{d}\phi\,,
\end{equation}
and if the pre-factor $E_{3}(\tau)$ goes to zero the volume elements of this space in a sense ``infinitely compress'' to zero volume resulting in the singular behavior. More mathematically rigorous is that some of the components of the Riemann tensor in an orthonormal frame of the effective metric (\ref{eq:tetradline}) become infinite when $E_{3}(\tau)=0$. 

The time evolution of $E_{3}$ for the case of the loop quantum corrected spherical scenario (with $\Lambda=0.001$) is illustrated in Figure~\ref{fig:E3LQG}. The initial conditions away from the classical singularity are chosen to be the classical ones. That is, the ones that satisfy the general relativity Schwarzschild solution as well as a required constraint condition $S=0$. Also in Figure~\ref{fig:E3LQG} the Hamiltonian evolution of $E_{3}$ for the non-quantum corrected Hamiltonian (i.e. pure general relativity) is illustrated. It can be seen that at $\tau=0$ the uncorrected $E_{3}$ goes to zero, and hence there is a curvature singularity. However, the quantum corrected Hamiltonian evolution avoids $E_{3}=0$, thus avoiding the classical curvature singularity. The full results are discussed in \cite{ref:ourlqgpaper}.
\begin{figure}[htbp]
\begin{center}
\includegraphics[width=0.90\textwidth, height=8.0cm,  clip, keepaspectratio=false]{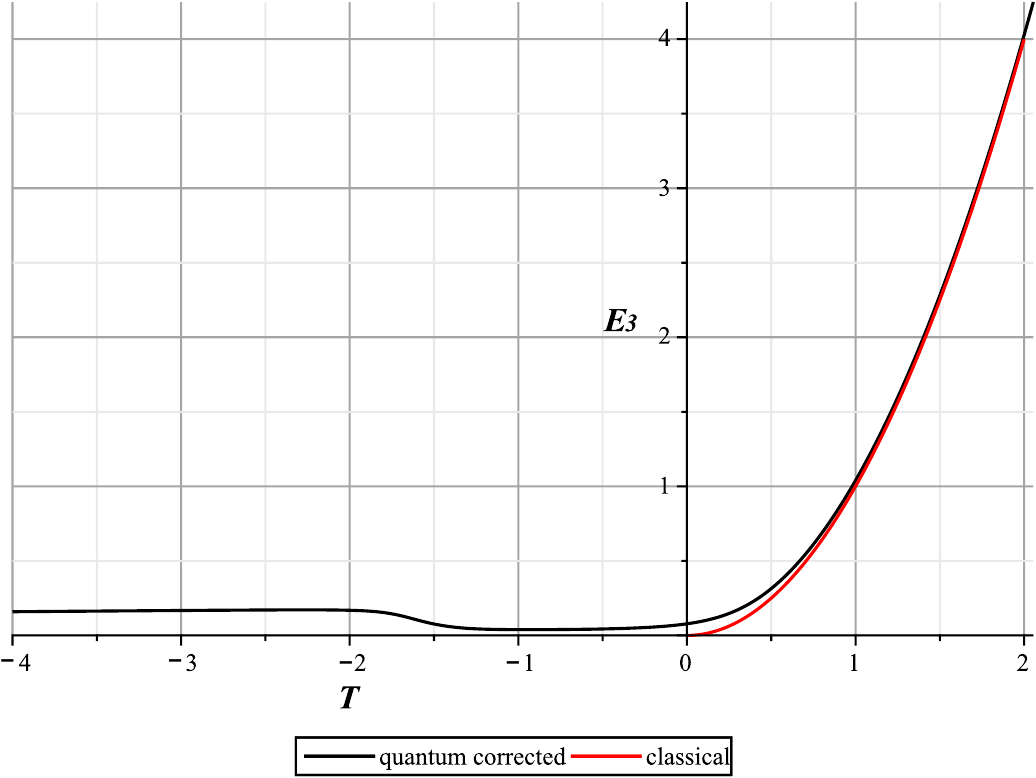}
\caption{{\small{The Hamiltonian evolution of the $E_{3}$ factor in the classical and quantum corrected scenarios. The classical evolution (red), which is the Schwarzschild metric with a small cosmological constant, behaves as $E_{3}=\tau^{2}$, and there is a singularity ($E_{3}=0$) at $\tau=0$. The quantum corrected evolution (black) avoids $E_{3}=0$ and has no corresponding singularity. Parameters: $\gamma=0.274$, $M=5$, $\Lambda= 0.001$. The full results are discussed in \cite{ref:ourlqgpaper}.}}}
\label{fig:E3LQG}
\end{center}
\end{figure}

We close this section by pointing out some drawbacks to this study. First, we should be clear that as of the writing of this manuscript there is no actual fully developed theory of quantum gravity. The study here simply implements some low-order corrections from the theory of loop quantum gravity, and there are other corrections that can be applied \cite{ref:invvol}. Also, the scheme here is ambiguous in the step (\ref{eq:deltapresc}), and other schemes may be employed \cite{ref:improvedholonomy}, \cite{ref:reviews},  \cite{ref:impdyn}. 

\subsection{Non-commutative geometry} \label{subsec:noncomm}
Here we discuss another possible theory which can avoid black hole singularities. It is an implementation of non-commutative geometry \cite{ref:connes},  \cite{ref:noncommgeo}. In quantum mechanics non-commutative geometry refers to extending the usual commutator between positions and momenta, $[x^{j},\, p_{k}]=i\delta^{j}_{k}$ to the positions
 \begin{equation}
[x^{j},\, x^{k}]=i\theta \epsilon^{jk}_{\;\;\,\ell}x^{\ell}\,. \label{eq:noncomcoords}
\end{equation}
Further, this can be extended to non-commutativity between the momenta in an analogous manner
 \begin{equation}
[p_{j},\, p_{k}]=i\beta\epsilon_{jk}^{\;\;\,\ell}p_{\ell}\,. \label{eq:noncommoms}
\end{equation}
In (\ref{eq:noncomcoords}) and (\ref{eq:noncommoms}) the constants $\theta$ and $\beta$ measure the amount of non-commutativity, and in quantum theory would be related to $\hbar$. The full technical details of the discussion in this section maybe found in \cite{ref:schneidnoncomm}.

Going ``backwards'' to classical Hamiltonian mechanics, we replace the commutator of quantum theory with the Poisson bracket of classical mechanics. For particle mechanics this means the analog of (\ref{eq:noncomcoords}) and (\ref{eq:noncommoms}) would be
\begin{subequations}
{\allowdisplaybreaks\begin{align}
 \left\{p_{a},\, x^{b}\right\} = & \, \delta_{a}^{b}\,, \label{eq:poisson1}\\
 \left\{x^{a},\, x^{b}\right\} = & \, \epsilon^{ab}_{\;\;\;c}\theta^{c} \,, \label{eq:poisson2}\\
 \left\{p_{a},\, p_{b}\right\} = & \, \epsilon_{ab}^{\;\;\;c}\beta_{c}\,. \label{eq:poisson3}
\end{align}}
\end{subequations}
Reviews of non-com\-mut\-a\-tive mechanics may be found in \cite{ref:djemai} and \cite{ref:gouba} and references therein. Specifically, the Poisson generalization to non-commutative mechanics is given by the Moyal product
\begin{equation}
 \left\{f,\, g\right\}:= f\star g -g\star f\, ,\label{eq:moyalpoisson} 
\end{equation}
where the Moyal product here is defined as
\begin{equation}
 \left(f\star g\right) (v):=\exp\left[\frac{1}{2}w^{ab} \partial_{a}\tilde{\partial}_{b}\right]f(v)\,g(\tilde{v})_{|\tilde{v}=v}\,, \label{eq:explicitmoyal}
\end{equation}
and operators with a tilde operate only on tilde coordinates and un-tilded operators operate on nontilded coordinates. The two sets of coordinates are made coincident in the end. $w^{ab}$ represents the generalization of the symplectic form whose matrix is given by
\begin{equation}
 \left[w^{ab}\right]=\left[
    \begin{array}{c;{2pt/2pt}c}
        \begin{matrix}
       \epsilon^{ab}_{\;\;\,c}\theta^{c}
        \end{matrix}  &
        \begin{matrix}
        -\delta^{ab} 
        \end{matrix} \\ \hdashline[2pt/2pt]
        \begin{matrix}
       \delta^{ab} 
        \end{matrix} &
         \begin{matrix}
        \epsilon^{ab}_{\;\;\,c}\beta^{c}
        \end{matrix}
    \end{array}
\right]\,. \label{eq:sympform}
\end{equation}

Next we transition to field theories from the particle mechanics picture above. In field theories the field variable is the configuration variable ($\psi^{a}(y)$) and associated with this is a canonical field momentum ($\pi_{a}(x)$). Analogous brackets to (\ref{eq:poisson1}) - (\ref{eq:poisson3}) are implemented:
\begin{subequations}
{\allowdisplaybreaks\begin{align}
 \left\{\pi_{a}(x),\, \psi^{b}(y)\right\} \propto & \, \delta_{a}^{b}\delta(x-y)\,,  \label{eq:fieldpoisson1}\\
 \left\{\psi^{a}(x),\, \psi^{b}(y)\right\} \propto & \, \epsilon^{ab}_{\;\;\;c}\theta^{c}\delta(x-y)  \,, \label{eq:fieldpoisson2}\\
 \left\{\pi_{a}(x),\, \pi_{b}(y)\right\} \propto & \, \epsilon_{ab}^{\;\;\;c}\beta_{c}\delta(x-y)\,. \label{eq:fieldpoisson3}
\end{align}}
\end{subequations}

In gravitation this procedure is not necessarily straight-forward \cite{ref:ncgeinststart}-\cite{ref:ncgeinstend}. To make the problem somewhat tractable the symmetry is frozen so that the metric is fixed to the form (\ref{eq:tetradline}). The system then mimics a mechanical model and it can be argued that the Poisson algebra be augmented in a similar manner to (\ref{eq:fieldpoisson1}) - (\ref{eq:fieldpoisson3}). For the variables in (\ref{eq:tetradline}) these become
\begin{equation}
 \left\{E_{i},\,a_{j}\right\}=\delta_{ij}, \;\;\; \left\{a_{i},\,a_{j}\right\}=\epsilon_{ij}\theta\,, \;\;\; \left\{{E}_{i},\,{E}_{j}\right\}=\epsilon_{ij}\beta\,, \label{eq:minisuppoisson}
\end{equation}
with $\epsilon_{ij}$ the two-dimensional antisymmetric symbol. The Hamiltonian (\ref{eq:ourhamconst}) is then used to evolve the system $\dot{a}_{2}=\left\{a_{2},\,S\right\}$ etc. with the crucial difference from regular Hamiltonian evolution being that that the brackets now include the extended non-commutativity of (\ref{eq:minisuppoisson}). As in section \ref{subsec:qg}, the critical variable for singularity studies is $E_{3}$, with the classical singularity corresponding to $E_{3}=0$ at $\tau=0$. In Figure~\ref{fig:noncommsing} one of many possible evolutions are illustrated for the case of a spherical black hole. The initial conditions are set in the same manner as in section \ref{subsec:qg}. We notice from the figure that the singularity at $\tau=0$, where in the regular theory $E_{3}\to 0$, is delayed indefinitely in the non-commutative mechanics.

\begin{figure}[htbp]
\begin{center}
\includegraphics[width=1.00\textwidth, clip, keepaspectratio=true]{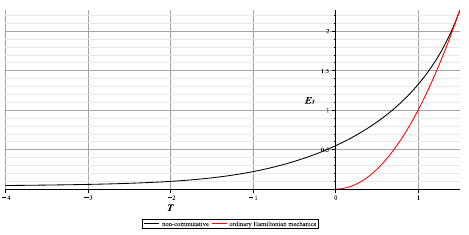}
\caption{{\small{The Hamiltonian evolution of the $E_{3}$ factor in regular Hamiltonian mechanics (red) and the noncommutative theory (black) for a spherical black hole. The singularity in the regular scenario at $\tau=0$ (signalled by $E_{3}=0$) is pushed indefinitely in negative $\tau$. The parameters are: $M=1$, $\Lambda=-0.1$, $\gamma=0.274$, $\theta=-0.3$, and $\beta=-0.1$. Full results may be found in \cite{ref:schneidnoncomm}.}}}
\label{fig:noncommsing}
\end{center}
\end{figure}

There are of course many scenarios that can be studied here, since not only can the geometry be spherical, toroidal, and higher-genus hyperbolic,  but also one can choose $\theta$ and $\beta$ to be zero or non-zero individually. We summarize the various results of \cite{ref:schneidnoncomm} in Table~1. 
\vspace{1.0cm}

\begin{center}
{\vspace{-0.00cm}\large \underline{\small{Table 1}}}\quad {\small{Summary of the non-commutative evolution of various black holes}}\enlargethispage{0.5cm}
\end{center}
{\small \begin{enumerate}\vspace{-0.5cm}
 \item {\bf Spherical}
\begin{itemize}
\item $\theta\neq 0,\,\beta=0$: Small values of $\theta$ delay the singularity. Large values of $\theta$ remove the singularity.
\item ${\theta=0,\,\beta\neq 0}$: Singularity is removed.
\item ${\theta\neq 0,\, \beta\neq 0}$: Singularity is removed.
\end{itemize}

\item {\bf Toroidal}
\begin{itemize}
\item $\theta\neq 0,\,\beta=0$: Small values of $\theta$ delay the singularity. Large values of $\theta$ introduce a new  horizon$^{*}$.
\item ${\theta=0,\,\beta\neq 0}$: Singularity is removed.
\item ${\theta\neq 0,\, \beta\neq 0}$: Singularity is removed for large $\theta$. For small $\theta$ a new horizon appears$^{*}$.
\end{itemize}
\item {\bf Higher-genus}
\begin{itemize}
\item $\theta\neq 0,\,\beta=0$: Small values of $\theta$ delay the singularity. Large values of $\theta$ introduce a new  horizon$^{*}$.
\item ${\theta=0,\,\beta\neq 0}$: Singularity is removed.
\item ${\theta\neq 0,\, \beta\neq 0}$: Singularity is removed for large $\theta$. For small $\theta$ a new horizon appears$^{*}$.
\end{itemize}
\end{enumerate}
\vspace{-0.1cm}
\hrule 
\vspace{0.1cm}{\footnotesize $^{*}$ The new horizon's presence prevents the determination of a singularity beyond the second horizon since the numerical scheme cannot penetrate horizons.}} 
\vspace{0.5cm}

A different non-commutative study in similar variables may be found in \cite{ref:othernoncomm}.

As in the previous section, we mention here some drawbacks to the scheme. First, from the results illustrated, the singularity is not necessarily removed in all scenarios. In particular, it seems non-commutativity in the canonical momenta is required in order to have singularity avoidance (i.e., $\beta\neq 0$). Non-commutativity in the configuration variables ($\theta\neq 0$) is insufficient. Also, some scenarios are inconclusive due to the formation of a second horizon. Also, adding the extra non-commutativity in the quantum mechanics of \emph{free} particles yields wave equations whose solutions mimic those of particles coupled to an electromagnetic field within regular quantum mechanics~\cite{ref:noncommqmech}. Therefore, non-commutativity seems to introduce some peculiar physical effects.

\subsection{Gravastars} \label{sec:gravastars}
One final alternative to singular spacetimes that we present in this brief review is an exotic object known as the gravastar (gravitational vacuum star). This model was first proposed by P. Mazur and E. Mottola \cite{ref:origgrava} and the idea is to not only eliminate the black hole singularity, but the horizon as well, so that there is no black hole. The model is essentially an extension of what were previously known as de~Sitter core black holes \cite{ref:gliner} - \cite{ref:poisisdesit}. 

The basic idea behind the gravastar is the following. Under extreme gravitational conditions, such as those found near the singularity of a black hole, the vacuum undergoes a phase transition and the spacetime becomes locally one of constant positive curvature. This is de~Sitter spacetime where the effective pressure equals minus the effective energy density. Note that this violates the inequality (\ref{eq:AsimpleSEC}) and therefore circumvents the singularity theorem of Appendix~\ref{sec:appendixA}. In the coordinate system of (\ref{eq:sphmet}) the de~Sitter spacetime possesses the line element
\begin{equation}
 ds^{2}=g_{\mu\nu}\textrm{d}x^{\mu}\textrm{d}x^{\nu}=-\left(1-\frac{\Lambda}{3}r^{2}\right)\textrm{d}t^{2} + \frac{\textrm{d}r^{2}}{1-\frac{\Lambda}{3}r^{2}}  +r^{2} \textrm{d}\theta^{2} + r^{2}\sin^{2}\theta\, \textrm{d}\phi^{2}\,, \label{eq:deSmet}
\end{equation}
with the cosmological constant $\Lambda > 0$. The Riemann tensor components in the orthonormal frame whose defining vectors $e^{\mu}_{\;\,\hat{\alpha}}$ point in the coordinate directions, are constant with magnitude $\Lambda/3$. Therefore, since in the vicinity of the black hole singularity ($r=0$) the spacetime is replaced by (\ref{eq:deSmet}), the curvature singularity does not exist in the gravastar model. If one moves the cosmological constant term in (\ref{eq:einsteq}) to the right-hand side of the equals sign, one can interpret the cosmological constant term in Einsteins equations as a stress-energy of the vacuum (since if $T_{\mu\nu}$, the stress-energy tensor of material, vanishes, there is still a source term on the right-hand side of the field equations). That is, in this interpretation the stress-energy of the vacuum is given by
\begin{equation}
 \underset{\mbox{\scriptsize{vac}}}{T_{\mu\nu}} = -\frac{\Lambda}{8\pi}g_{\mu\nu}\,. \label{eq:setvac}
\end{equation}
Therefore, in the de~Sitter scenario, which corresponds to $\Lambda >0$, the energy density, $-T^{0}_{\;\;\,0}$, is positive and the pressures, $T^{i}_{\;\;i}$ (no sum), are negative.

The occurrence of such a phase transition as described above is motivated by condensed matter physics where a stable low energy state (analogous to the gravitational vacuum here) phase transitions to another state at a critical temperature. Further discussion can be found in \cite{ref:mazmotPNAS} and an excellent review of gravastars in general may be found in \cite{ref:saibalgrava}.

As mentioned above, the gravastar proposal not only aims at the elimination of the singularity, but also the event horizon. However, there is now good observational evidence for black holes or at least objects that closely resemble black holes. It is therefore postulated that the phase transition occurs when gravitational conditions are such that a horizon is just about to form, and outside the gravastar the gravitational field is almost indistinguishable from the spacetime of the corresponding black hole that the gravastar replaces. Qualitatively the process would be something similar to the following. A star's nuclear fuel diminishes near the end of its life and, as described in section~\ref{subsec:formation}, gravitational contraction ensues. Assuming that nothing stops the contraction, as the material collapses in on itself the gravitational field at some points in spacetime approaches closer and closer to that of an event horizon. When the conditions for event horizon formation are almost satisfied, the gravitational phase transition takes over, so that the would-be black hole now evolves to a gravastar configuration. Therefore, just outside of where the horizon would have formed, the spacetime is almost exactly that of the corresponding black hole, making it difficult to distinguish a gravastar from a black hole using exterior observations. However it is not impossible to distinguish the two scenarios \cite{ref:gravadifferences}.

In the case of spherical symmetry, the interior de~Sitter spacetime (Equation (\ref{eq:deSmet})) needs to transition to an essentially Schwarzschild spacetime (Equation (\ref{eq:schwmet})) at   somewhere near but outside of the event horizon $r=2M$. In order for this to happen continuously, the radial pressure $T^{1}_{\;\;\,1}$ and the tangential pressures $T^{2}_{\;\;\,2}$ and $T^{3}_{\;\;\,3}$ must differ in some region along the transition zone~\cite{ref:catvis,ref:ourgrava}. This makes the gravastar anisotropic in parts of its interior region. This anisotropy is demanded by the Einstein equation (Equation (\ref{eq:einsteq})) governing the gravitational field. Furthermore, it is physically desirable that the outermost layer of the gravastar possess non-exotic physics. This means that the outer layer, which we call the atmosphere of the gravastar, should have a decreasing radial pressure and energy density, just like one expects in an ordinary star. The qualitative behavior of the continuous radial pressure which transitions from de~Sitter to Schwarzschild, as discussed here, is illustrated in Figure~\ref{fig:gravapres}. Continuity of the radial pressure in static spherical symmetry is demanded by the Synge junction condition~\cite{ref:syngebook}:
\begin{equation}
 \left[T^{\mu}_{\;\;\,\nu}\hat{n}^{\nu}\right]_{-} - \left[T^{\mu}_{\;\;\,\nu}\hat{n}^{\nu}\right]_{+} = 0\,, \label{eq:syngcond}
\end{equation}
when taking the unit normal $\hat{n}^{\nu}$ as pointing radially outward from an $r=\mbox{const.}$ junction surface. The $+$ and $-$ subscripts indicate whether one is approaching the junction surface from above or below. Note that since $T^{\mu}_{\;\;\,\nu}=0$ for the Schwarzschild metric, Equation (\ref{eq:syngcond}) implies that at the stellar surface, where the material terminates, the radial pressure must smoothly go to zero. This mathematically defines the boundary of the star.

\begin{figure}[htbp]
\begin{center}
\includegraphics[width=0.75\textwidth, height=0.60\textwidth, clip, keepaspectratio=false]{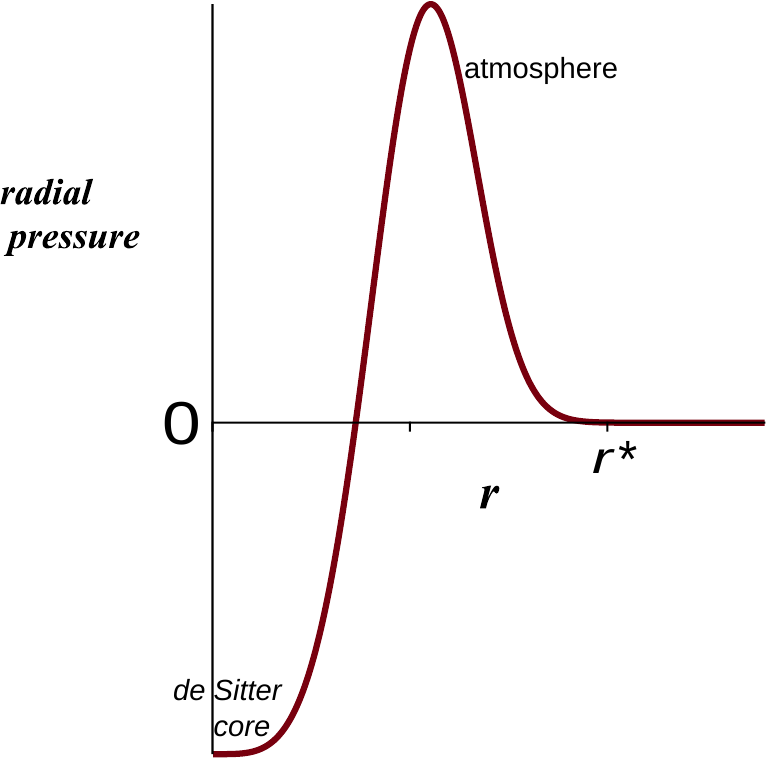}
\caption{{\small{The radial pressure, $T^{1}_{\;\;\,1}$, for a gravastar that is locally de~Sitter at $r=0$ and essentially Schwarzschild at $r>r^{*}$, taken as the boundary of the star. Since Schwarzschild is vacuum, this means the pressure must vanish there. As stated in the main text, it is demanded that in the atmosphere this pressure is monotonically decreasing. Full results may be found in \cite{ref:catvis}, \cite{ref:ourgrava}.}}}
\label{fig:gravapres}
\end{center}
\end{figure}

There has been a lot of work done on the gravastar alternative to black holes and a comprehensive review may be found in \cite{ref:saibalgrava}. Of particular importance in astrophysics are models with rotation. Perturbative rotating models have been studied in \cite{ref:rotgravstart}-\cite{ref:rotgravend} and a class of exact models has been constructed in \cite{ref:ourrotgrav} where an exactly de~Sitter interior has been smoothly patched to a Kerr (\ref{eq:kerrmet}) exterior with a physically acceptable atmosphere region. Rotating models are much more complicated to analyze so we omit a detailed discussion of them.

Again here we should close with some drawbacks to this model of singularity avoidance. One potential issue is that it is not clear why such a phase transition as required for gravastar formation should occur, nor exactly what the physics behind it would be. It is also not clear why the phase transition, even if it is feasible, should occur at the onset of horizon formation. It is true though that there are arguments that quantum phase transition effects are not only strong field effects, but can occur under relatively weak conditions \cite{ref:mazmotPNAS}, so the weakness of the gravitational field in the vicinity of a horizon does not necessarily preclude some sort of gravitationally induced quantum effects from occurring there.

\section{Conclusions}

We have presented here a brief summary of the history of some singular spacetimes mainly within the theory of general relativity. It has long been desired to find an acceptable way to remove the singularities in these spacetimes and replace them with something more benign. This often requires the invocation of new or exotic physics and every alternative has its pros and cons. We have presented here only a few possibilities such as certain possible quantum gravity effects, non-commutative geometry effects, also inspired by quantum mechanics, and the gravastar alternative to black holes. All of these are admittedly speculative at the moment but the topic of gravitation is very difficult to test experimentally in the strong field regime so a certain degree of speculation is inevitable in the field.  As mentioned in the main text this is not an exhaustive review but will hopefully provide the reader with an understanding of the problem and possible avenues for a resolution.

\section*{Acknowledgments}The author would like to thank Professor S. Ray of the Centre for Cosmology, Astrophysics and Space Science (CCASS) at GLA University in Mathura, India for the invitation to participate in the conference associated with this manuscript. The author would also like to thank the anonymous referees for their suggestions, which improved the manuscript.

\PRLsep
\appendix
\setcounter{equation}{0}
\renewcommand\theequation{A.\arabic{equation}}

\section{Appendix: A simple singularity theorem} \label{sec:appendixA}
Here we present a simple singularity theorem which illustrates how certain types of singularities can occur. The understanding of this theorem is not critical to the rest of the manuscript and may be skipped without loss of continuity. A more in-depth analysis of such singularity theorems may be found in the classic book by Hawking and Ellis \cite{ref:hawkell}. We should note though that certain singularity theorems do not actually guarantee that a singularity will occur \cite{ref:newkerr}.

Before presenting the theorem and its proof, we state the strong energy condition. The strong energy condition asserts
\begin{equation}
 T_{\mu\nu}V^{\mu}V^{\nu} \geq \frac{1}{2} T\,V^{\mu}V_{\mu} \quad \forall \;\mbox{timelike}\;\; V^{\mu}\,, \label{eq:ASEC}
\end{equation}
where $T_{\mu\nu}$ is the stress-energy tensor of the gravitating material and $T:=T^{\alpha}_{\;\;\,\alpha}$ its invariant trace. If the stress-energy tensor is diagonalizable, and the ``zeroth'' coordinate is timelike and the other three spacelike (Hawking-Ellis type one \cite{ref:hawkell}) then the stress-energy tensor can be represented as
\begin{equation}
 \left[T^{\mu}_{\;\;\,\nu}\right]=\mbox{diag}\left[ -\rho, \, p_{1},\, p_{2},\, p_{3}\right]\,, \label{eq:Adiagset}
\end{equation}
where $\rho$ represents the energy density of the material in its rest frame, and the $p_{i}$ its principal pressures, that is, momentum flux. For this structure of stress-energy tensor the strong energy condition (\ref{eq:ASEC}) simplifies to
\begin{equation}
 \rho+p_{i}\geq 0 \qquad \mbox{and}\qquad \rho +\sum_{i=1}^{3} p_{i} \geq 0\,. \label{eq:AsimpleSEC}
\end{equation}

Now, if the Einstein field equations (\ref{eq:einsteq}) with $\Lambda=0$ hold, it can easily be shown by tracing the Einstein equations that $T=-R/8\pi$ and with this result that 
\begin{equation}
 R_{\mu\nu} =8\pi \left(T_{\mu\nu}-\frac{1}{2}T\,g_{\mu\nu}\right)\,. \label{eq:ARofT}
\end{equation}
In terms of the Ricci tensor therefore, the strong energy condition (\ref{eq:ASEC}) can be recast as
\begin{equation}
 R_{\mu\nu}V^{\mu}V^{\nu}\geq 0\,. \label{eq:ARicecond}
\end{equation}
This statement of non-negative Ricci curvature physically implies that in general relativity matter should not create a repulsive gravitational effect for timelike test particles.

\vspace{0.1cm}The singularity theorem is now stated as \cite{ref:dasdebbook}:
\begin{theorem}
Assume that the following field equations hold in a domain $D\subset \mathbb{R}^{4}$:
\begin{equation}
 G_{\mu\nu}=8\pi T_{\mu\nu}\, \label{eq:Aeinsteq}
\end{equation}
and further, that $T_{\mu\nu}$ is diagonalizable as in (\ref{eq:Adiagset}). Let the strong energy condition (\ref{eq:ASEC}) hold. Assume also that the vorticity tensor vanishes and further suppose that a congruence of timelike velocities, $V^{\mu}$, possesses a negative expansion scalar $\theta(x) < 0$ in a neighborhood of $D$. Then, each of the timelike geodesics $x=\chi(s)$ will terminate at a negative expansion singularity at $x^{*}=\chi(s^{*})\subset \partial D$  after a finite proper time.
\end{theorem}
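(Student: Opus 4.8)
The plan is to derive the statement from the Raychaudhuri equation for the timelike congruence and convert it into a scalar differential inequality that forces the expansion to diverge. First I would write, along an integral curve of $V^{\mu}$ parametrized by proper time $s$,
\begin{equation}
\frac{\textrm{d}\theta}{\textrm{d}s} = -\frac{1}{3}\theta^{2} - \sigma_{\mu\nu}\sigma^{\mu\nu} + \omega_{\mu\nu}\omega^{\mu\nu} - R_{\mu\nu}V^{\mu}V^{\nu}\,,
\end{equation}
where $\sigma_{\mu\nu}$ and $\omega_{\mu\nu}$ are the shear and vorticity of the congruence and the coefficient $1/3$ counts the transverse spatial dimensions in four-dimensional spacetime. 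This identity itself follows from differentiating $\theta=\nabla_{\mu}V^{\mu}$ along the geodesic and using the Ricci identity for $V^{\mu}$.

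Next I would discard every term that cannot make $\theta$ grow. The vorticity vanishes by hypothesis, so $\omega_{\mu\nu}\omega^{\mu\nu}=0$. The shear tensor is orthogonal to $V^{\mu}$ and therefore lives on a spacelike, positive-definite subspace, which gives $\sigma_{\mu\nu}\sigma^{\mu\nu}\geq 0$. Finally, since the field equations (\ref{eq:Aeinsteq}) hold with a diagonalizable, Hawking--Ellis type-one $T_{\mu\nu}$ satisfying the strong energy condition, we already know $R_{\mu\nu}V^{\mu}V^{\nu}\geq 0$, i.e. (\ref{eq:ARicecond}). Combining these three facts collapses the Raychaudhuri equation to the Riccati-type inequality
\begin{equation}
\frac{\textrm{d}\theta}{\textrm{d}s} \leq -\frac{1}{3}\theta^{2}\,.
\end{equation}

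The third step is an elementary comparison argument. Setting $u(s):=1/\theta(s)$, which is well defined and negative near the starting parameter $s_{0}$ because $\theta(s_{0})<0$ there, the inequality becomes $\textrm{d}u/\textrm{d}s\geq 1/3$, so integrating from $s_{0}$ yields
\begin{equation}
\frac{1}{\theta(s)} \geq \frac{1}{\theta(s_{0})} + \frac{s-s_{0}}{3}\,.
\end{equation}
Since $1/\theta(s_{0})<0$, the right-hand side reaches zero no later than $s=s_{0}+3/|\theta(s_{0})|$, which forces $\theta(s)\to-\infty$ at some finite proper time $s^{*}\leq s_{0}+3/|\theta(s_{0})|$. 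That divergence of the expansion at $x^{*}=\chi(s^{*})$ is precisely the negative expansion singularity in the statement, and it occurs after finite proper time.

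The remaining step is to place $x^{*}$ on $\partial D$: throughout the interior of $D$ the metric, the congruence, and hence $\theta$ are smooth and finite, so the blow-up point cannot be an interior point, and the geodesic must therefore have reached $\partial D$ within the proper time just bounded. The Raychaudhuri identity and the one-line integration are routine; the point that needs care, and which I expect to be the main obstacle, is this interpretive step — arguing cleanly that a divergence of $\theta$ at finite proper parameter genuinely terminates each geodesic in $D$ (rather than merely registering a caustic of the chosen congruence), and making precise that the ``negative expansion singularity'' named in the theorem is exactly this breakdown. One should also verify that the ``in a neighborhood of $D$'' hypothesis really supplies $\theta(s_{0})<0$ along every geodesic under consideration, so that the comparison bound applies uniformly.
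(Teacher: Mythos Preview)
Your proposal is correct and follows essentially the same route as the paper: both invoke the vorticity-free Raychaudhuri equation, use $\sigma_{\mu\nu}\sigma^{\mu\nu}\geq 0$ together with the strong energy condition (via $R_{\mu\nu}V^{\mu}V^{\nu}\geq 0$) to obtain $\textrm{d}(\theta^{-1})/\textrm{d}s\geq 1/3$, and then integrate to conclude that $\theta\to-\infty$ within proper time at most $3/|\theta(s_{0})|$. The only cosmetic difference is that the paper keeps the non-negative remainder as an explicit function $[f(s)]^{2}$ and writes an equality for $\theta^{-1}$ before invoking monotonicity, whereas you pass directly to the inequality; your closing caveat that $\theta\to-\infty$ is, strictly speaking, a focusing/caustic statement rather than an automatic curvature singularity is well taken and is in fact a point the paper's proof does not address.
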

\begin{proof}
Consider the Raychaudhuri equation with vanishing vorticity,
\begin{equation}
\frac{\textrm{d}\theta(s)}{\textrm{d}s} +\frac{1}{3}\theta^{2}(s)= -\left\{ \sigma^{\mu\nu}\sigma_{\mu\nu} + R_{\mu\nu}V^{\mu}V^{\nu}\right\}_{|x=\chi(s)}\,, \label{eq:ARay}
\end{equation}
where $\sigma_{\mu\nu}$ is the shear tensor whose square, $\sigma^{\mu\nu}\sigma_{\mu\nu}$, can be shown to be non-negative \cite{ref:MBlau}. From the field equations (\ref{eq:ARofT}) holds so, via (\ref{eq:Adiagset}), the Raychaudhuri equation can be written as
\begin{equation}
 \frac{\textrm{d}\theta(s)}{\textrm{d}s} +\frac{1}{3}\theta^{2}(s)= -\left\{ \sigma^{\mu\nu}\sigma_{\mu\nu} + 4\pi \left(\rho +\sum_{i} p_{i} \right) \right\}_{|x=\chi(s)}\;. \label{eq:ARayT}
\end{equation}
If the strong energy condition holds the right-hand side of the above is less than or equal to zero so that
\begin{equation}
  -\left\{ \sigma^{\mu\nu}\sigma_{\mu\nu} + 4\pi \left(\rho +\sum_{i} p_{i} \right) \right\}_{|x=\chi(s)} =: -\left[f(s)\right]^{2} \leq 0\,, \label{eq:Aconverge}
\end{equation}
and multiplying (\ref{eq:ARayT}) by $\theta^{-2}(s)$ allows us to write it as:
\begin{equation}
 \frac{\textrm{d}}{\textrm{d}s}\theta^{-1}(s)=\frac{1}{3}+[f(s)]^{2}[\theta(s)]^{-2} \geq 0\,. \label{eq:Aconverge2}
\end{equation}

Now, assume that there is an initial convergence $\theta(s_{0}) < 0$. The solution to the differential equation (\ref{eq:Aconverge2}) is
\begin{equation}
\theta^{-1}(s)= \theta^{-1}(s_{0}) + \frac{1}{3}(s-s_{0}) + \int_{s_{0}}^{s} [f(s')]^{2}[ \theta(s')]^{-2}\, \textrm{d}s'\,. \label{eq:Aconvergesol}
\end{equation}
Assuming continuity and differentiability, from the monotonicity of (\ref{eq:Aconvergesol}) there will be some finite {$s=s^{*} > s_{0}$} where  $\theta^{-1}(s^{*})=0$\, or\, $\theta(s^{*}) \rightarrow -\infty$ and there will be infinitely strong contraction.
\end{proof}

\PRLsep

\vspace{-0.080cm}

\linespread{0.6}
\bibliographystyle{unsrt}

\end{document}